\documentclass[11pt]{article}
\usepackage[margin=1in]{geometry}
\usepackage{amsmath,amssymb,amsthm,mathtools}
\usepackage[T1]{fontenc}
\usepackage{lmodern}
\usepackage[colorlinks=true,linkcolor=blue,citecolor=blue,urlcolor=blue]{hyperref}
\usepackage{microtype}
\usepackage{graphicx}

\newtheorem{theorem}{Theorem}[section]
\newtheorem{lemma}[theorem]{Lemma}
\newtheorem{proposition}[theorem]{Proposition}

\theoremstyle{definition}
\newtheorem{definition}[theorem]{Definition}

\newcommand{\eps}{\varepsilon}

\title{An Improved Bound for Testing $k$-Colorability}
\author{Bangjie Xu\\[2pt] \small Tsinghua University\\ \small \texttt{xbj24@mails.tsinghua.edu.cn}}
\date{}

\begin{document}
\maketitle

\begin{abstract}
We study the sample complexity of testing $k$-colorability in the dense graph model. A graph is $\varepsilon$-far from being $k$-colorable if at least $\varepsilon n^2$ edge deletions are required to make it $k$-colorable. We prove that a uniformly random induced subgraph on $O(k\log k/\varepsilon)$ vertices is non-$k$-colorable with probability at least $2/3$ whenever the original graph is $\varepsilon$-far from $k$-colorable. The proof combines a palette-based coloring process with a container-style encoding-and-reconstruction argument. This yields $O(k\log k/\varepsilon)$, which is no worse than the previous state of the art $O(k\log (1/\varepsilon)/\varepsilon)$ throughout the nonvacuous parameter range, and is an improvement when $\log k=o(\log(1/\varepsilon))$. For each fixed $k$, combining this upper bound with the known $\Omega(\log k/\varepsilon)$ lower bound gives $\Theta(1/\varepsilon)$ dependence on $\varepsilon$.
\end{abstract}

\section{Introduction}

\newcommand{\SC}{S}

Deciding whether a graph $G$ satisfies $\chi(G)\le k$ is NP-complete for every fixed $k\ge 3$~\cite{Karp1972,GJS1976}; under the Exponential Time Hypothesis it cannot be decided in time $2^{o(n)}$~\cite{IPZ2001}, where $n$ is the number of vertices; nevertheless, extensive researches have shown that for every fixed $k$ it can be decided in exponential but sub-$2^n$ time: for $k=3$ there is an algorithm running in time $O(1.3289^n)$~\cite{BE2005}, for $k=4$ one running in time $O(1.7159^n)$~\cite{WGJX2024}, and for every fixed $k$ there is $\eps_k>0$ such that $k$-colorability can be decided by a randomized algorithm in time $O((2-\eps_k)^n)$~\cite{Zamir2026}. In fact, for arbitrary $k$ and $n$ there are $n$-vertex graphs of chromatic number $k$ whose induced subgraphs on linear-size vertex sets are all $3$-colorable~\cite{Erdos1962}. This lack of local witnesses is what makes the global condition $\chi(G)>k$ hard to detect, and it is what makes testing $k$-colorability nontrivial. However, a body of work has shown that if there is a substantial gap between $G$ and a $k$-colorable graph (that is, a constant fraction of $G$'s edges has to be deleted before it becomes $k$-colorable), then even a random small induced subgraph of $G$ is non-$k$-colorable with constant probability, where the size of the subgraph depends only on $k$ and $\varepsilon$ and not on $|V(G)|$.

A graph $G$ on $n$ vertices is $\eps$-far from being $k$-colorable if at least $\eps n^2$ of its edges have to be removed to make it $k$-colorable, and the sample complexity $\SC_k(n,\eps)$ of $k$-colorability is the smallest $q$ such that, for every $\eps$-far $G$ with $n$ vertices, a uniformly random $q$-vertex induced subgraph of $G$ is non-$k$-colorable with probability at least $2/3$. The quantity $\SC_k(n,\eps)$ is an important parameter in property testing in the dense graph model~\cite{GGR1998,GT2003,Goldreich2017}; it is the minimum sample complexity of a canonical $\varepsilon$-tester for $k$-colorability on graphs over $n$ vertices. As we will soon see, $\SC_k(n,\eps)$ is in fact independent of $n$, so throughout the rest of the paper we write $\SC_k(\eps)$.

Bollob\'{a}s, Erd\H{o}s, Simonovits and Szemer\'{e}di~\cite{BESS1978} showed for $k=2$, and R\"{o}dl and Duke~\cite{RD1985} showed for every fixed $k$ the existence of a constant $f_k(\eps)$ such that any graph having sufficiently many vertices that is $\eps$-far from being $k$-colorable has an induced subgraph with no more than $f_k(\eps)$ vertices that is non-$k$-colorable; both proofs used Szemer\'{e}di's regularity lemma~\cite{Szemeredi1978}. Their proofs can be reorganized to show that a random sample of $C(\eps,k)$ (which is independent of $n$) vertices succeeds with constant probability. Goldreich, Goldwasser and Ron~\cite{GGR1998} abandoned regularity altogether and, by constructively extending partitions of the graph, obtained the polynomial bound $\SC_k(\eps)=\tilde{O}(k^2/\eps^3)$ for $k\ge 3$ and $\tilde{O}(1/\eps^2)$ for $k=2$. Alon and Krivelevich~\cite{AK2002} improved the bound to $\tilde{O}(k/\eps^2)$ for $k\ge 3$ and $\tilde{O}(1/\eps)$ for $k=2$ by growing a tree of colorings of the sample, and they also proved the lower bound $\SC_k(\eps)=\Omega(1/\eps)$. Sohler~\cite{Sohler2012} then obtained $\tilde{O}(k^6/\eps)$, by reducing the problem to the satisfiability of a constraint satisfaction problem.

More recently, Blais and Seth~\cite{BS2023,BS2024} obtained $\SC_k(\eps)=O(\frac{k}{\eps}\log^2 \frac{1}{\eps})$ by introducing the container method of extremal combinatorics~\cite{KW1982,BMS2015,ST2015} into property testing. Kushnir and Shapira~\cite{KushnirShapira2026} refined this analysis, removing a factor of $\log(1/\eps)$ and thereby attaining $O(\frac{k}{\eps} \log \frac{1}{\eps})$, which was the previous state of the art, and they also proved a lower bound $\SC_k(\eps)=\Omega(\log k/\eps)$, conjecturing that $\Theta(\log k/\eps)$ is the truth. In this paper, we prove an upper bound of $O(\frac{k}{\eps}\log k)$ (Theorem~\ref{thm:main} below). 

\begin{theorem}\label{thm:main}
For every integer $k\ge2$ and every $\eps>0$, \footnote{Our proof also applies when $k=1$. Thus, a formulation valid for all integers $k\ge1$ is $O\!\left(k\log(k+1)/\eps\right)$.}
\[
 S_k(\eps)=O\!\left(\frac{k}{\eps}\log k\right),
\]
with an absolute implied constant. More explicitly, choose
\[
 T:=\left\lceil\frac{4k}{\eps}\right\rceil,
 \qquad
 \alpha:=\frac{\eps}{4(k+1)},
\]
and set
\[
 q:=\max\left\{4T,\left\lceil\frac{4}{\alpha}\log(16k)\right\rceil\right\}.
\]
Then every $n$-vertex graph that is $\eps$-far from being $k$-colorable has a non-$k$-colorable induced subgraph on $q$ uniformly sampled vertices with probability at least $2/3$.
\end{theorem}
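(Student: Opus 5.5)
We split the sample of $q$ vertices into a ``query'' part $Q$ of size $T$ (a sequence $v_1,\dots,v_T$, taken as the first $T$ sampled vertices) and a ``test'' part $R$ of size $q-T\ge 3q/4$. Throughout, $G$ is $\eps$-far from $k$-colorable, and we argue by contradiction: we show that for most samples, every proper $k$-coloring of $G[Q\cup R]$ would produce a $k$-coloring of $G$ with fewer than $\eps n^2$ monochromatic edges.

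\textbf{The palette process.} Fix a hypothetical proper $k$-coloring $c$ of the sample. We run a deterministic process along $v_1,\dots,v_T$. At each step every vertex $u\in V(G)$ carries a palette $L(u)\subseteq[k]$, initially $[k]$. When $v_t$ is processed with color $c(v_t)$, we remove $c(v_t)$ from $L(u)$ for all neighbors $u$ of $v_t$. We call $v_t$ \emph{informative} if its color is in its own current palette and its neighborhood contains at least $\alpha n$ vertices whose palettes still contain $c(v_t)$; otherwise we skip it. Each informative step deletes at least $\alpha n$ (vertex, color) incidences, and there are at most $kn$ of these, so at most $k/\alpha=O(k^2/\eps)$ steps are informative. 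The real content is in choosing the encoding more cleverly: we record only the informative indices and their colors. The fingerprint is a subset of $[T]$ of size at most $m$ together with colors, and we want at most $\binom{T}{\le m}k^m$ fingerprints with $m$ small.

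\textbf{The container step.} After the process ends, the final palettes $L$ depend only on the fingerprint, which is a container. We define the coloring $\phi(u)=$ some color in $L(u)$, or an arbitrary color if $L(u)=\emptyset$. Call $u$ \emph{bad} if $L(u)=\emptyset$ or if $u$ has at least $\alpha n$ neighbors $w$ with $\phi(u)\in L(w)$. Since $G$ is $\eps$-far, this coloring has at least $\eps n^2$ monochromatic edges. A counting argument using the choice $\alpha=\eps/(4(k+1))$ then shows that one of two things holds. Either a constant fraction of the random vertices $v_t$ would have been informative (so the process must stop early, and we use $T=\lceil 4k/\eps\rceil$ to guarantee termination implies the sample has hit a dense conflict), or there are at least $\Omega(\eps n)$ vertices $u$ whose palette is forced. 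Each such $u$ is a witness: if it lands in $R$, its color in any extension is in $L(u)$, yet it conflicts with $\ge\alpha n$ vertices. The conclusion is that, for a fixed container, $G[R]$ admits a coloring consistent with $L$ with probability at most $(1-\alpha)^{|R|}$, roughly $e^{-\alpha q\cdot 3/4}$.

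\textbf{Union bound.} The number of containers must be at most $e^{O(T\log k)}$ or so, and we need the failure probability to beat it. That requires $\alpha q\gtrsim \log(\#\text{containers})$, which would be $k^2\log k/\eps$ if the containers are counted naively. This is too much, and it is the main obstacle.

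To fix it, I would try to make the palette process adaptive in the Kushnir--Shapira style. The indices are not free; they are determined by the random sequence itself, since informativeness is a deterministic function of the history. The encoding then only needs the colors of informative vertices among the first $T$ samples. That gives at most $k^{T}$ histories, but still $T\log k=k\log k/\eps$ nats.

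The probability bound must then be per-step rather than on $R$. Each sampled vertex, given the history, is either informative or consistent with the current container. Hence the probability that the process survives $T$ steps without exhibiting a contradiction is at most $(1-\eps')^{\Omega(T)}$, beating $k^{T}$ only if the per-step loss rate compensates. This is the delicate place, and I expect to need the second term $\frac4\alpha\log(16k)$ together with $R$ to absorb the final union over the $k$-fold color choice at the end, giving failure probability at most $1/3$.
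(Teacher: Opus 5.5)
Your proposal does not reach a proof. It breaks down at the two places you flag.

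\textbf{The container step is asserted, not derived, and it fails for your process.} Since $c$ is proper and you delete $c(v_t)$ only from neighbors of pivots, every sample vertex keeps its own color in $L$. So what your container imposes on $R$ is only proper list-colorability of $G[R]$ from $L$. A vertex that is bad for the single choice $\phi(u)$ can still take another color of $L(u)$. Likewise, $\ge\alpha n$ potential conflicts in $G$ force no conflict inside the sample, because those neighbors may also pick other colors. The bound $(1-\alpha)^{|R|}$ is therefore unjustified.

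The dichotomy also clashes with your own parameters. With a fixed threshold $\alpha n$, the label budget $kn$ allows up to $k/\alpha=4k(k+1)/\eps$ informative steps, far more than $T=\lceil 4k/\eps\rceil$. So a constant fraction of informative steps among the first $T$ samples gives no contradiction.

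\textbf{The adaptive repair cannot work.} Beating $k^T$ color histories with per-step survival probability $1-\eps'$ requires $k(1-\eps')^{c}<1$, where $c$ is the constant hidden in $\Omega(T)$. A single step can only be expected to fail with probability of order $\alpha$ or $\eps$, so $k^T(1-\eps')^{\Omega(T)}$ is not small.

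\textbf{The missing idea} is an adaptive threshold that ties the size of the excluded set to the length of the fingerprint. In the paper, pivots are not taken in sample order. At round $t$ the pivot $f_t$ is the unprocessed sample vertex maximizing the minimum palette-degree $m_{Q^{t-1}}$. Its value $D_t$ is the threshold, and every vertex of $G$ whose minimum palette-degree exceeds $D_t$ has its \emph{entire} palette emptied; then $f_t$ gets $\{i_t\}$ and its neighbors lose $i_t$. Three consequences follow:
\begin{enumerate}
\item In the run guided by a proper coloring of the sample, every unprocessed sample vertex is a candidate, so no sample vertex is ever emptied.
\item Each pivot strips at least $D_t$ labels, so $\sum_t D_t\le kn$.
\item A reconstruction lemma (color surviving vertices in reverse order of removal using witness colors of palette-degree $\le D_t$, then sum by parts) shows the following. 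If $|V(G)\setminus U(P^t)|<\alpha tn$ for every $t\le T$, then $G$ has a coloring with at most $(\alpha(k+1)+k/T)n^2\le\eps n^2/2$ monochromatic edges.
\end{enumerate}
Hence some round $t$ yields a container that is a \emph{vertex set}: it contains the whole sample, misses $\ge\alpha tn$ vertices, and is determined by $t$ pivots and $t$ colors.

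The max-degree pivot rule is stable under deleting non-pivot sample vertices. So the pivots need only be located by choosing $t$ of $2t$ fixed blocks of sample positions, at most $\binom{2t}{t}\le 4^t$ choices. The $\ge q/4$ sample vertices outside those blocks must avoid the missing set, which has probability $\le e^{-\alpha tq/4}$. The cost $(4k)^t$ and the gain $e^{-\alpha tq/4}$ both scale linearly in $t$, so $\sum_t(4k)^te^{-\alpha tq/4}\le\sum_t4^{-t}\le 1/3$ once $q\ge\frac4\alpha\log(16k)$.

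Your sample-order, fixed-threshold process has neither the linear-in-$t$ missing set nor a stable pivot rule that would support this block encoding.
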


Comparing with the former upper bound of $O(\frac{k}{\eps}\log \frac{1}{\eps})$ of Kushnir and Shapira~\cite{KushnirShapira2026}, by the conclusion of Lemma~\ref{lem:range} below, the new bound is no worse than the previous one throughout the nonvacuous parameter range, since $\eps<1/(2k)$ implies $\log k<\log\frac{1}{\eps}$. Moreover, it is a strict improvement whenever $1/\eps$ is superpolynomial in $k$. Also, if $k$ is fixed, our upper bound matches the existing best lower bound of $\Omega(\frac{\log k}{\eps})$ up to a constant factor, which shows that the sample complexity (treating $k$ as constant) is $\Theta(\frac{1}{\eps})$.

In general, a container argument compresses a large family of combinatorial objects by recording a relatively small fingerprint and associating it with a container that covers every object having that fingerprint. The counting is based on the fingerprints and the containers they correspond to. In this paper, the proof idea is a combination of containers and palettes. The use of palettes and lists of available colors is classical in graph coloring, see, for example, work on bounded-palette list coloring and palette sparsification \cite{BonamyKang2017,AssadiChenKhanna2019,Ferdous2024,CzumajPengShiSohler2026}. 

Our algorithmic mechanism is different from \cite{BS2023,KushnirShapira2026}'s. For each original vertex, we use the minimum monochromatic degree over all colors that remain available in its palette.  In our process, the deletion triggered by a degree threshold empties the entire palette of an original vertex, rather than removing a single vertex--color copy. After defining a graph consisting of $k$ copies of the original graph, \cite{BS2023,KushnirShapira2026} introduce a container process that, in fact, applies to any undirected graph. In this paper, we restate the process from the perspective of palettes to make the proof appear more natural.

\section{Notation and conventions}\label{sec:notation}

We work throughout with finite simple graphs. In particular, all graphs considered below are nonempty, and we write $n:=|V(G)|\ge 1$. For an integer $k\ge 1$, write $[k]=\{1,\dots,k\}$. All logarithms are natural.

A graph $G$ is called $\eps$-far from being $k$-colorable if at least $\eps n^2$ edges must be deleted from $G$ in order to obtain a $k$-colorable graph. The sample complexity $\SC_k(\eps)$ is the smallest positive integer $q$ with the following property: for every $\eps$-far graph $G$, a uniformly random $q$-vertex induced subgraph is non-$k$-colorable with probability at least $2/3$. If $q>n$, we sample all vertices.

For a map $c:V(G)\to[k]$, let $M(c)$ denote the set of monochromatic edges, that is, edges $uv$ with $c(u)=c(v)$. Thus $|M(c)|$ is the number of monochromatic edges under $c$.

A palette system $P$ assigns to each vertex $v$ a set $P(v)\subseteq[k]$, and empty palettes are allowed. Define
\[
 U(P):=\{v:P(v)\neq\varnothing\}.
\]
For $i\in P(v)$, set
\[
 d_i^P(v):=|\{u\in N_G(v):i\in P(u)\}|,
 \qquad
 m_P(v):=\min_{i\in P(v)}d_i^P(v).
\]
The label count of $P$ is
\[
 \lambda(P):=\sum_{v\in V(G)}|P(v)|.
\]
We write $P'\subseteq P$ if $P'(v)\subseteq P(v)$ for every vertex $v$.

For positive integers $r$, let
\[
 P_N^r:=N(N-1)\cdots(N-r+1)
\]
be the number of ordered $r$-tuples of distinct elements from an $N$-element set. We set $P_N^r=0$ when $r>N$.

\section{Distance characterization}\label{sec:distance}

\begin{lemma}\label{lem:distance}
A graph $G$ is $\eps$-far from being $k$-colorable if and only if every map $c:V(G)\to[k]$ satisfies
\[
 |M(c)|\ge \eps n^2.
\]
\end{lemma}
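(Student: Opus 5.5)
The plan is to show that the minimum number of edge deletions needed to make $G$ $k$-colorable equals $\min_{c:V(G)\to[k]}|M(c)|$. Once this identity is in hand, the lemma follows at once: $G$ is $\eps$-far precisely when this minimum is at least $\eps n^2$, and that says every map $c$ has $|M(c)|\ge\eps n^2$. I will prove the identity as two inequalities.

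First, the minimum number of deletions is at most $\min_c |M(c)|$. Fix any map $c:V(G)\to[k]$ and delete the edge set $M(c)$. Every remaining edge $uv$ has $c(u)\neq c(v)$, so $c$ is a proper $k$-coloring of $G-M(c)$. Hence $|M(c)|$ deletions always suffice.

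Second, the minimum number of deletions is at least $\min_c|M(c)|$. Suppose $F\subseteq E(G)$ is a set of edges such that $G-F$ is $k$-colorable, and let $c$ be a proper $k$-coloring of $G-F$; this is a map $V(G)\to[k]$. Any edge $uv$ with $c(u)=c(v)$ cannot survive in $G-F$, because $c$ is proper there, so it must lie in $F$. Thus $M(c)\subseteq F$ and $|F|\ge|M(c)|\ge\min_{c'}|M(c')|$.

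There is no real obstacle here. The only points needing care are that colorings are allowed to leave some of the $k$ colors unused, which the definition of a map $V(G)\to[k]$ already permits, and that deleting edges leaves the vertex set unchanged, so $c$ is defined on all of $V(G)$.
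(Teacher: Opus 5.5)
Your proof is correct and follows essentially the same route as the paper: both show that the minimum number of deletions equals $\min_{c}|M(c)|$, using that deleting $M(c)$ makes $c$ proper and that any deletion set making $c$ proper must contain $M(c)$. Your version only spells out the two inequalities more explicitly.
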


\begin{proof}
For a fixed map $c$, deleting all monochromatic edges makes $c$ a proper $k$-coloring. Conversely, if deleting an edge set makes $c$ proper, then that edge set contains every monochromatic edge of $c$. Hence the minimum number of deletions needed to obtain a $k$-colorable graph is
\[
\min_{c:V(G)\to[k]}|M(c)|,
\]
which proves the claim.
\end{proof}

\begin{lemma}\label{lem:range}
If there exists an $n$-vertex $\eps$-far graph with $n\ge1$, then
\[
\eps<\frac1{2k}.
\]
\end{lemma}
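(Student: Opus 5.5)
By Lemma~\ref{lem:distance}, it suffices to exhibit a single map $c:V(G)\to[k]$ with $|M(c)|<\eps n^2$. If $G$ is $\eps$-far, then every map has at least $\eps n^2$ monochromatic edges, so the existence of such a $c$ forces a contradiction unless $\eps<1/(2k)$. Choose a uniformly random map $c$, each vertex independently and uniformly colored from $[k]$, and compute the expected number of monochromatic edges. Each edge $uv$ is monochromatic with probability exactly $1/k$. Since $G$ is simple, $|E(G)|\le\binom n2=n(n-1)/2$. Hence
\[
\mathbb E\,|M(c)|=\frac{|E(G)|}{k}\le\frac{n(n-1)}{2k}<\frac{n^2}{2k},
\]
where the last inequality is strict because $n\ge1$ gives $n(n-1)<n^2$.

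Some map $c$ attains at most the expectation, so $|M(c)|<n^2/(2k)$. By Lemma~\ref{lem:distance}, $\eps$-farness gives $\eps n^2\le|M(c)|<n^2/(2k)$. Dividing by $n^2>0$ yields $\eps<1/(2k)$.

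The only delicate point is keeping the inequality strict. The bound $\binom n2<n^2/2$ supplies this strictness for every $n\ge1$, including $n=1$. In that case $G$ has no edges, and being $\eps$-far would require $0\ge\eps$. This already forces $\eps<1/(2k)$ if $\eps\le 0$ is allowed, and is impossible if $\eps>0$, so the conclusion holds either way. No other obstacle is expected; a deterministic alternative is to take a balanced partition into $k$ parts, but the averaging argument is cleaner.
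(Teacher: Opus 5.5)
Your proof is correct and is essentially the paper's argument: a uniformly random coloring has expected $|E(G)|/k\le n(n-1)/(2k)<n^2/(2k)$ monochromatic edges, so some map has fewer than $n^2/(2k)$, and Lemma~\ref{lem:distance} gives $\eps n^2<n^2/(2k)$. One small point: your opening sentence misstates the target, since you need a map with fewer than $n^2/(2k)$ monochromatic edges, not fewer than $\eps n^2$; your second paragraph uses the correct target.
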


\begin{proof}
Color every vertex independently and uniformly from $[k]$. Every edge is monochromatic with probability $1/k$, and therefore
\[
\mathbb E|M(c)|=\frac{|E(G)|}{k}\le \frac{\binom n2}{k}=\frac{n(n-1)}{2k}<\frac{n^2}{2k}.
\]
Thus some map $c_0$ satisfies $|M(c_0)|<n^2/(2k)$. By Lemma~\ref{lem:distance}, every map has at least $\eps n^2$ monochromatic edges. Hence
\[
\eps n^2\le |M(c_0)|<\frac{n^2}{2k},
\]
so $\eps<\frac{1}{2k}$.
\end{proof}

\section{A reconstruction lemma}\label{sec:reconstruction}

Our aim is to construct a $k$-coloring with relatively few monochromatic edges. We do this by coloring the vertices in several stages. When coloring a new stage, we only need to count monochromatic edges joining its vertices to vertices in the same stage or in earlier stages. The general strategy is therefore to color the vertices from easy to difficult: vertices with large palette degrees are harder to handle and are deferred to later stages. At a given stage, vertices deferred to later stages are temporarily ignored. This is reflected in the construction below by emptying the palettes of these ``more difficult'' vertices. To control the total number of monochromatic edges, we assign each vertex a time at which its color is easiest to handle. The running minimum $E_t$ records these best available thresholds.

\begin{lemma}\label{lem:reconstruction}
Let $P^0\supseteq P^1\supseteq\cdots\supseteq P^T$ be palette systems with $P^0(v)=[k]$ for every $v$. Suppose that for every $1\le t\le T$ and every $v\in U(P^t)$ there is a color $i\in P^t(v)$ such that
$d_i^{P^t}(v)\le D_t$, where $D_t\in[0,n]$, and suppose
\[
\sum_{t=1}^T D_t\le kn.\tag{1}
\]
Let
\[
 a_t:=|U(P^{t-1})\setminus U(P^t)|,
 \qquad
 B_t:=\sum_{s=1}^t a_s=n-|U(P^t)|.
\]
If
\[
 B_t\le \alpha tn\qquad(1\le t\le T),\tag{2}
\]
then there is a map $c:V(G)\to[k]$ with
\[
 |M(c)|\le\left(\alpha(k+1)+\frac{k}{T}\right)n^2.\tag{3}
\]
\end{lemma}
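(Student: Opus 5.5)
The plan is to color the vertices in stages according to when they leave $U(P^t)$, giving each vertex a color chosen at a carefully selected earlier time. Then each monochromatic edge is charged to one endpoint via the degree bounds $d_i^{P^t}(v)\le D_t$.

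\emph{Setup.} Put $D_0:=n$; at $t=0$ every palette is $[k]$, so every $v$ has $d_i^{P^0}(v)\le n$ for any $i$. Define the running minimum $E_t:=\min_{0\le s\le t}D_s$, and fix $\tau(t)\in\{0,\dots,t\}$ with $D_{\tau(t)}=E_t$. Choose the ties so that $\tau$ is nondecreasing; for example, take $\tau(t)$ to be the last index attaining the minimum. Let $A_s:=U(P^{s-1})\setminus U(P^s)$, so $|A_s|=a_s$, and let $A_{T+1}:=U(P^T)$.
\begin{itemize}
\item For $v\in A_s$ with $1\le s\le T$, set $\tau_v:=\tau(s-1)\le s-1$.
\item For $v\in A_{T+1}$, set $\tau_v:=\tau(T)$.
\end{itemize}
In both cases $v\in U(P^{\tau_v})$, because palettes only shrink. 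By the hypothesis (or trivially when $\tau_v=0$), we may pick $c(v)\in P^{\tau_v}(v)$ with $d_{c(v)}^{P^{\tau_v}}(v)\le D_{\tau_v}$.

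\emph{Charging.} Take a monochromatic edge $uv$ with $\tau_u\le\tau_v$. Then $c(v)\in P^{\tau_v}(v)\subseteq P^{\tau_u}(v)$, so $v$ is counted in $d_{c(u)}^{P^{\tau_u}}(u)$. Charge the edge to $u$. Each vertex $u$ then receives at most $D_{\tau_u}$ charges. It follows that
\[
|M(c)|\le \sum_{s=1}^{T}a_sE_{s-1}+(n-B_T)E_T .
\]

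\emph{Summation by parts.} This is the step that needs the most care. Writing $a_s=B_s-B_{s-1}$ with $B_0=0$ gives
\[
\sum_{s=1}^{T}a_sE_{s-1}+(n-B_T)E_T=\sum_{s=1}^{T}B_s(E_{s-1}-E_s)+nE_T .
\]
The sequence $E$ is nonincreasing, so every coefficient $E_{s-1}-E_s$ is nonnegative. Hence (2) may be inserted termwise:
\[
\sum_{s=1}^{T}B_s(E_{s-1}-E_s)\le \alpha n\sum_{s=1}^T s(E_{s-1}-E_s)=\alpha n\Bigl(\sum_{s=0}^{T-1}E_s-TE_T\Bigr).
\]
This is at most $\alpha n\bigl(n+\sum_{s=1}^{T}D_s\bigr)$, using $E_0=n$ and $E_s\le D_s$. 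By (1) that is at most $\alpha(k+1)n^2$.

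\emph{The remaining term.} We have $E_T\le\min_{1\le t\le T}D_t\le\frac1T\sum_{t=1}^T D_t\le kn/T$, so $nE_T\le kn^2/T$. Adding the two bounds yields (3).

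The main points to verify are the monotonicity of $\tau$, which makes palettes nested along the charging direction, and the sign condition in the summation by parts.
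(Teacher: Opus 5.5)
Your proof is correct and is essentially the paper's argument. Both choose witness colors at running-minimum times and charge each monochromatic edge to the endpoint with the earlier witness time, then finish with the same summation by parts against $B_t\le\alpha tn$ and the bound $\min_t D_t\le kn/T$. The paper phrases the charging as coloring the layers in reverse order, and it stops at the first minimizer $r$ rather than at $T$; this changes nothing, since $E_t=E_r$ for $t\ge r$. One small remark: the monotonicity of $\tau$ that you flag as a key point is not actually needed, because your charging uses only $\tau_u\le\tau_v$ together with the nestedness $P^{\tau_v}\subseteq P^{\tau_u}$.
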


\begin{proof}
Let $r$ be the first index attaining $\min_{1\le t\le T}D_t$. Then $D_r\le kn/T$ since $\sum_{t=1}^T D_t\le kn$. Define
\[
 E_0:=n,
 \qquad
 E_t:=\min_{1\le s\le t}D_s.
\]
Thus $E_0\ge E_1\ge\cdots\ge E_T$ and $\sum_{t=1}^T E_t\le kn$.

Let $1=r_1<r_2<\cdots<r_j=r$ be the record indices of the running minimum, so that $r_{\ell+1}$ is the next strict decrease after $r_\ell$. Then
\[
E_{t-1}=D_{r_\ell}
\qquad\text{whenever }r_\ell<t\le r_{\ell+1}.
\tag{4}
\]
If $r=1$, the list contains only $r_1=1$ and the intermediate layers below are empty.

Partition $V(G)$ into
\[
V(G)\setminus U(P^{r_1}),
\quad
U(P^{r_\ell})\setminus U(P^{r_{\ell+1}})\ (1\le\ell<j),
\quad
U(P^r).
\]
First color $U(P^r)$, assigning to each vertex a witness color of degree at most $D_r$. Since $|U(P^r)|\le n$, counting all possible neighbors gives at most $nD_r$ monochromatic edges inside this part.

Next add the remaining layers in reverse order. For a vertex in $U(P^{r_\ell})\setminus U(P^{r_{\ell+1}})$, use a witness color in $P^{r_\ell}$ of degree at most $D_{r_\ell}$. Every already colored vertex uses a color belonging to its palette at time $r_\ell$, by nestedness. Hence this vertex creates at most $D_{r_\ell}$ new monochromatic edges to previously colored vertices. Finally color $V(G)\setminus U(P^1)$ arbitrarily, charging at most $n$ edges per vertex.

Using (4), the total number of monochromatic edges is at most
\[
 nD_r+\sum_{t=1}^r a_tE_{t-1}.\tag{5}
\]
With $B_0=0$, summation by parts gives
\[
\begin{aligned}
\sum_{t=1}^r a_tE_{t-1}
&=B_rE_{r-1}+\sum_{t=1}^{r-1}B_t(E_{t-1}-E_t)\\
&\le \alpha n\left(rE_{r-1}+\sum_{t=1}^{r-1}t(E_{t-1}-E_t)\right)\\
&=\alpha n\sum_{t=0}^{r-1}E_t\\
&=\alpha n\left(E_0+\sum_{t=1}^{r-1}E_t\right)\\
&\le \alpha n\left(n+kn\right)\\
&\le \alpha(k+1)n^2.
\end{aligned}
\]
Together with $nD_r\le kn^2/T$, we have reached the conclusion that there is a way to color the graph as required.
\end{proof}

\section{The trial coloring process}\label{sec:process}

The trial coloring process calculates palettes $Q^0, P^1, Q^1, P^2, Q^2,\ldots$ and chooses pivots $f_1,\ldots,f_T$ using colors $i_1,\ldots,i_T$. Two features of this process are crucial. First, it only removes colors from palettes and never adds any. Second, it is dynamic: each operation is based on the current palette system, rather than on the initial graph alone.

Fix a total order on $V(G)$ for tie-breaking. Given $A\subseteq V(G)$ and prescribed colors $i_1,i_2,\ldots\in[k]$, the trial coloring process is defined as follows.

\begin{quote}
\textbf{Trial coloring process.}\\
\textbf{Input:} A graph $G=(V,E)$, an integer $k\ge1$, a set $A\subseteq V$, a round limit $T$, a prescribed color sequence $(i_t)_{t\ge1}$, and a fixed total order on $V$ for tie-breaking.\\
\textbf{Output:} The pivots, thresholds, cap palettes, and palette systems generated before the process stops or completes $T$ rounds.
\begin{tabbing}
\hspace{1.5em}\=\hspace{1.5em}\=\kill
1. \> For every $v\in V$, set $Q^0(v)\gets[k]$; set $P^0\gets Q^0$ and $F_0\gets\varnothing$.\\
2. \> \textbf{for} $t=1,2,\ldots,T$ \textbf{do}\\
3. \> \> If some $v\in A$ has $Q^{t-1}(v)=\varnothing$, or $A\setminus F_{t-1}=\varnothing$, stop.\\
4. \> \> Choose $f_t\in A\setminus F_{t-1}$ maximizing $m_{Q^{t-1}}(v)$, breaking ties by the fixed order; set $D_t\gets m_{Q^{t-1}}(f_t)$.\\
5. \> \> Form the cap palette $P^t$ from $Q^{t-1}$ according to the rule below.\\
6. \> \> If $i_t\notin P^t(f_t)$, stop.\\
7. \> \> Set $F_t\gets F_{t-1}\cup\{f_t\}$ and initially set $Q^t(w)\gets P^t(w)$ for every $w\in V$.\\
8. \> \> Set $Q^t(f_t)\gets\{i_t\}$; for every $u\in N_G(f_t)$, set $Q^t(u)\gets P^t(u)\setminus\{i_t\}$.\\
9. \> \textbf{end for}\\
10. \> \textbf{return} the generated pivots, thresholds, cap palettes, and palette systems.
\end{tabbing}
The cap update in step 5 is
\[
P^t(w)=\begin{cases}
\varnothing,&Q^{t-1}(w)\ne\varnothing\text{ and }m_{Q^{t-1}}(w)>D_t,\\
Q^{t-1}(w),&\text{otherwise}.
\end{cases}
\]
\end{quote}

In particular,
\[
 U(P^t)=\{w:Q^{t-1}(w)\ne\varnothing\text{ and }m_{Q^{t-1}}(w)\le D_t\}.\tag{6}
\]

In step 5, all degrees and the threshold are computed from the palette system $Q^{t-1}$ before any cap or neighbor update.

\section{Analysis of the trial coloring process}\label{sec:process-analysis}

\subsection{Process invariants}\label{sec:invariants}

\begin{lemma}\label{lem:invariants}
For every completed run:
\begin{enumerate}
\item $P^0\supseteq P^1\supseteq\cdots$, and every $v\in U(P^t)$ has a color $i\in P^t(v)$ with $d_i^{P^t}(v)\le D_t$; moreover $0\le D_t\le n$.
\item On every run completing $T$ rounds,
\[
\sum_{t=1}^T D_t\le kn.\tag{7}
\]
\item If $G[A]$ has a proper coloring $\varphi:A\to[k]$, then, whenever round $t$ is reached, the prescribed choice $i_t=\varphi(f_t)$ is legal. If $|A|\ge T$, it completes $T$ rounds and preserves $\varphi(v)$ in the palette of every $v\in A$ throughout; in particular $A\subseteq U(P^t)$ for all $t\le T$.
\end{enumerate}
\end{lemma}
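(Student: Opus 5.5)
\textbf{Part 1.} This part is a direct reading of the process. For nestedness I will check $P^t\subseteq Q^{t-1}\subseteq P^{t-1}$. The cap rule in step 5 either keeps $Q^{t-1}(w)$ or empties it. Step 8 only removes colors, except at $f_t$, where $Q^t(f_t)=\{i_t\}\subseteq P^t(f_t)$ by step 6. For the witness property, take $v\in U(P^t)$. By (6), $Q^{t-1}(v)\neq\varnothing$ and $m_{Q^{t-1}}(v)\le D_t$, so there is a color $i\in Q^{t-1}(v)=P^t(v)$ with $d_i^{Q^{t-1}}(v)\le D_t$. Since $P^t\subseteq Q^{t-1}$, we get $d_i^{P^t}(v)\le d_i^{Q^{t-1}}(v)\le D_t$. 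Finally, $D_t$ is a count of neighbours, so $0\le D_t\le n-1\le n$.

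\textbf{Part 2.} I would prove (7) by a charging argument against the label count: $\lambda(Q^0)=kn$, and labels are never added. In round $t$, step 6 guarantees $i_t\in P^t(f_t)=Q^{t-1}(f_t)$, hence
\[
D_t=m_{Q^{t-1}}(f_t)\le d_{i_t}^{Q^{t-1}}(f_t).
\]
This is the number of neighbours $u$ of $f_t$ with $i_t\in Q^{t-1}(u)$. For each such $u$ (note $u\neq f_t$), the label $(u,i_t)$ is absent from $Q^t(u)$. Either the cap emptied $P^t(u)$, or step 8 removed $i_t$ from $P^t(u)$. So round $t$ destroys at least $D_t$ labels that were present in $Q^{t-1}$. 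Labels only disappear, so labels destroyed in different rounds are distinct. Summing over $t$ gives $\sum_t D_t\le\lambda(Q^0)=kn$. I expect this to be the main step: the key observation is that the minimum degree at the pivot is bounded by the degree of the prescribed colour $i_t$, and that every counted neighbour label is destroyed regardless of whether it is lost to the cap or to the neighbour update.

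\textbf{Part 3.} I would argue by induction on $t$, with the invariant that $\varphi(v)\in Q^{t-1}(v)$ for every $v\in A$, and $Q^{t-1}(f_s)=\{\varphi(f_s)\}$ for every $s<t$.

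Under the invariant, the stopping test in step 3 fails for the first reason since every palette on $A$ is nonempty. If $|A|\ge T$, it also fails for the second reason, because $|F_{t-1}|=t-1<T$.

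Next I would show the cap empties no vertex of $A$.
\begin{itemize}
\item For $v\in A\setminus F_{t-1}$, we have $m_{Q^{t-1}}(v)\le D_t$ by the maximality of $f_t$ in step 4.
\item For a previous pivot $f_s\in F_{t-1}$, every neighbour lost $\varphi(f_s)$ in round $s$, and palettes only shrink afterwards. Hence $m_{Q^{t-1}}(f_s)=d^{Q^{t-1}}_{\varphi(f_s)}(f_s)=0\le D_t$.
\end{itemize}
Thus $P^t(v)=Q^{t-1}(v)\ni\varphi(v)$ on $A$. In particular $i_t=\varphi(f_t)\in P^t(f_t)$, so the choice is legal and step 6 does not stop the process.

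In step 8, $f_t$ keeps $\{\varphi(f_t)\}$. A neighbour $u\in A$ of $f_t$ loses only $\varphi(f_t)$, which differs from $\varphi(u)$ because $\varphi$ is proper. This restores the invariant for round $t+1$. Hence the run completes $T$ rounds and $A\subseteq U(P^t)$ for all $t\le T$.
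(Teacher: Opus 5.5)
Your proposal is correct and follows essentially the same route as the paper. Part 1 reads the witness color off the minimizer in $Q^{t-1}(v)=P^t(v)$. Part 2 charges $D_t\le d^{Q^{t-1}}_{i_t}(f_t)$ to labels $(u,i_t)$ that are destroyed either by the cap or by the neighbor update, and telescopes against $\lambda(Q^0)=kn$. Part 3 is the same induction, preserving $\varphi$ on $A$ and the singleton palettes at earlier pivots.

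The differences are cosmetic. You verify $Q^t\subseteq P^t$ explicitly, including $Q^t(f_t)=\{i_t\}\subseteq P^t(f_t)$. You also derive that no neighbor of $f_s$ carries $\varphi(f_s)$ from step 8 plus monotonicity, rather than carrying this in the inductive hypothesis.
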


\begin{proof}
For the first assertion, caps only empty entire palettes and never add labels. If $v\in U(P^t)$, then $m_{Q^{t-1}}(v)\le D_t$ by (6). Choose $i^*\in Q^{t-1}(v)$ attaining this minimum. Since $v$ was not emptied, $P^t(v)=Q^{t-1}(v)$, and hence
\[
 d_{i^*}^{P^t}(v)\le d_{i^*}^{Q^{t-1}}(v)=m_{Q^{t-1}}(v)\le D_t.
\]
The bounds on $D_t$ are immediate from its definition and $|N_G(f_t)|\le n-1$.

For the second assertion, since $i_t\in Q^{t-1}(f_t)$,
\[
 d_{i_t}^{Q^{t-1}}(f_t)\ge D_t.
\]
Choose $D_t$ distinct neighbors $u_1,\ldots,u_{D_t}$ carrying color $i_t$ in their palettes. At the end of the round, each such vertex has lost the label $i_t$: it is either emptied by the cap or loses $i_t$ in the neighbor update. Therefore
\[
 \lambda(Q^{t-1})-\lambda(Q^t)\ge D_t.
\]
Summing over rounds and telescoping yields (7), since $\lambda(Q^0)=kn$.

For the third assertion, let $\varphi$ be a proper coloring of $G[A]$. We prove inductively that, after each round, $\varphi(v)$ remains in the palette of every $v\in A$; moreover, each selected pivot $f_s$ has the singleton palette $\{\varphi(f_s)\}$, and no neighbor of $f_s$ carries $\varphi(f_s)$ in its palette. These assertions hold initially, with the pivot assertions vacuous.

Suppose they hold before round $t$. Every vertex of $A$ is active. If $|A|\ge T$ and $t\le T$, then an unchosen vertex remains, so $f_t$ is defined. Every unchosen $v\in A$ is a candidate in the maximization defining $f_t$, and consequently
\[
 m_{Q^{t-1}}(v)\le m_{Q^{t-1}}(f_t)=D_t.
\]
Thus the cap does not empty its palette and preserves $\varphi(v)$. For an earlier pivot $f_s$, the inductive assertions give
\[
 d_{\varphi(f_s)}^{Q^{t-1}}(f_s)=0,
\]
so its singleton palette is also preserved by the cap. In particular, $\varphi(f_t)\in P^t(f_t)$, making the prescribed color choice legal.

The update deletes $\varphi(f_t)$ only from neighbors of $f_t$. Properness of $\varphi$ ensures that no such neighbor in $A$ loses its prescribed color. The pivot $f_t$ receives the singleton palette $\{\varphi(f_t)\}$, and all its neighbors lose that color. The assertions for earlier pivots remain valid. This completes the induction and proves that the run completes $T$ rounds.\end{proof}

\subsection{The missing-cap lemma}\label{sec:missing}

\begin{lemma}\label{lem:missing}
Suppose that $G$ is $\eps$-far from being $k$-colorable. Let
\[
T:=\left\lceil\frac{4k}{\eps}\right\rceil \ge 1,
\qquad
\alpha:=\frac{\eps}{4(k+1)}.
\]
Every run completing $T$ rounds has some $1\le t\le T$ with
\[
 |U(P^t)|\le (1-\alpha t)n.\tag{8}
\]
Consequently, if $|A|\ge T$ and $G[A]$ is $k$-colorable, then the proper-coloring run contains all of $A$ in the cap at that time.
\end{lemma}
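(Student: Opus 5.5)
We argue by contradiction, combining Lemma~\ref{lem:invariants} with Lemma~\ref{lem:reconstruction} and Lemma~\ref{lem:distance}. Suppose some run completes $T$ rounds and yet $|U(P^t)|>(1-\alpha t)n$ for every $1\le t\le T$.

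\textbf{Step 1: the hypotheses of Lemma~\ref{lem:reconstruction} hold.} We use the cap palettes $P^1,\dots,P^T$ together with $P^0(v)=[k]$ for all $v$.
\begin{itemize}
\item By Lemma~\ref{lem:invariants}(1), these palettes are nested, and every $v\in U(P^t)$ has a witness color $i$ with $d_i^{P^t}(v)\le D_t$, where $D_t\in[0,n]$.
\item By Lemma~\ref{lem:invariants}(2), inequality (1) holds, since the run completes $T$ rounds.
\item The assumed failure of (8) gives $B_t=n-|U(P^t)|<\alpha tn$ for every $t$, which is (2).
\end{itemize}
One point needs a sentence: nestedness of the cap palettes. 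We have $P^{t+1}\subseteq Q^t\subseteq P^t$, because the cap update only empties palettes and the neighbor and pivot updates only remove labels. Hence $P^0\supseteq P^1\supseteq\cdots\supseteq P^T$.

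\textbf{Step 2: the contradiction.} Lemma~\ref{lem:reconstruction} now produces a map $c$ with
\[
|M(c)|\le\left(\alpha(k+1)+\frac kT\right)n^2.
\]
By the choice of parameters, $\alpha(k+1)=\eps/4$ and $k/T\le \eps/4$, since $T\ge 4k/\eps$. Thus $|M(c)|\le \eps n^2/2<\eps n^2$. The strict inequality holds because $n\ge1$ and $\eps>0$. This contradicts Lemma~\ref{lem:distance}, since $G$ is $\eps$-far. Therefore some $t$ satisfies (8).

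\textbf{Step 3: the consequence.} Suppose $|A|\ge T$ and $G[A]$ has a proper coloring $\varphi$. By Lemma~\ref{lem:invariants}(3), the run with $i_t=\varphi(f_t)$ completes $T$ rounds, and $A\subseteq U(P^t)$ for all $t\le T$. Applying the first part to this run, there is a time $t$ satisfying (8), and at that time all of $A$ lies in the cap $U(P^t)$.

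\textbf{Main obstacle.} The only delicate point is checking that the nested-chain and witness hypotheses of Lemma~\ref{lem:reconstruction} refer to the cap palettes $P^t$ rather than the $Q^t$. This works because the witness property in Lemma~\ref{lem:invariants}(1) is stated for $P^t$, and nestedness follows from the remark in Step 1. The remaining steps are direct arithmetic.
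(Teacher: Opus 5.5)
Your proof is correct and follows the paper's argument: assume (8) fails for every $t$, feed Lemma~\ref{lem:invariants} into Lemma~\ref{lem:reconstruction}, contradict Lemma~\ref{lem:distance} via $\alpha(k+1)+k/T\le\eps/2$, and read off the consequence from Lemma~\ref{lem:invariants}(3). The differences are cosmetic. The paper also notes, via Lemma~\ref{lem:range}, that $\alpha T<1$, so the right-hand sides of (8) are positive; your contradiction argument does not need this. You re-derive the nestedness $P^{t+1}\subseteq Q^t\subseteq P^t$, which Lemma~\ref{lem:invariants}(1) already supplies.
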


\begin{proof}
By Lemma~\ref{lem:range}, whenever a relevant graph exists we have $0<\eps<1/(2k)$. Hence
\[
\alpha T\le\frac{4k+\eps}{4(k+1)}<1,
\]
so all right-hand sides in (8) are positive.

Assume for contradiction that (8) fails for every $t$. Then
\[
B_t=n-|U(P^t)|<\alpha tn,
\]
so $B_t\le\alpha tn$. Lemma~\ref{lem:invariants} supplies all the other hypotheses of Lemma~\ref{lem:reconstruction}, including (7). Therefore some coloring $c$ satisfies
\[
 |M(c)|\le\left(\alpha(k+1)+\frac{k}{T}\right)n^2
\le\frac\eps2n^2<\eps n^2,
\]
contradicting Lemma~\ref{lem:distance}. The final statement follows from the last part of Lemma~\ref{lem:invariants}.\end{proof}

\subsection{Deletion stability}\label{sec:stability}

\begin{lemma}\label{lem:stability}
Suppose that a run executed on $A$ has first $t$ pivots $f_1,\ldots,f_t$. If
\[
\{f_1,\ldots,f_t\}\subseteq A'\subseteq A,
\]
then running on $A'$ with the same successive color choices produces the same first $t$ pivots, thresholds, caps, and palette updates.
\end{lemma}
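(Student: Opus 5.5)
Induction on the round index $s=0,1,\ldots,t$: I will show that after $s$ rounds the two runs (on $A$ and on $A'$, with the same color choices $i_1,i_2,\ldots$) have produced identical pivots $f_1,\ldots,f_s$, thresholds $D_1,\ldots,D_s$, cap palettes $P^1,\ldots,P^s$ and palette systems $Q^0,\ldots,Q^s$, together with $F_s=\{f_1,\ldots,f_s\}$ in both runs. The base case $s=0$ is immediate: both runs start with $Q^0(v)=[k]$ for all $v\in V$ and $F_0=\varnothing$. The key structural point is that $A$ enters the process only in step 3 (stopping test) and in step 4 (the candidate set for the pivot). Steps 5, 7 and 8 are functions of $Q^{s-1}$, $D_s$, $f_s$, $i_s$ and $G$ alone.

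For the inductive step from $s-1$ to $s\le t$, I will proceed in order of the steps of the round. In step 3, the run on $A$ does not stop in round $s$, because it produces an $s$-th pivot. So no $v\in A$ has $Q^{s-1}(v)=\varnothing$, hence no $v\in A'\subseteq A$ does. Also $f_s\in A'\setminus F_{s-1}$, so that set is nonempty, and the run on $A'$ does not stop either.

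In step 4, both runs maximize the same function $v\mapsto m_{Q^{s-1}}(v)$ under the same fixed tie-breaking order, over the candidate sets $A'\setminus F_{s-1}\subseteq A\setminus F_{s-1}$. The maximizer $f_s$ over the larger set lies in the smaller set, by the hypothesis $f_s\in A'$. A maximizer over a set that happens to lie in a subset is still the maximizer over the subset, with the same tie-breaking. Hence the run on $A'$ also picks $f_s$, and $D_s=m_{Q^{s-1}}(f_s)$ agrees.

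Step 5 then gives the same $P^s$. Step 6 does not stop, because it did not stop in the run on $A$ and $i_s$, $P^s(f_s)$ coincide. Steps 7 and 8 give the same $F_s$ and $Q^s$.

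The only step needing care is the pivot selection: one must check that the lexicographic "maximize $m$, then break ties by the fixed order" selection is hereditary to subsets containing the chosen element. I would state this explicitly as the crux, noting that the tie-breaking order is global on $V$ rather than relative to $A$. Everything else is routine bookkeeping.
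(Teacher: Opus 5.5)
Your proof is correct and takes essentially the same route as the paper's: induction on rounds, with the crux being that $f_s$ stays the lexicographic maximizer over the smaller candidate set $A'\setminus F_{s-1}\subseteq A\setminus F_{s-1}$ because the tie-breaking order is global, while caps and updates depend only on $Q^{s-1}$, $f_s$ and $i_s$. Your explicit checks of the stopping tests in steps 3 and 6 spell out what the paper compresses into its last sentence. Both arguments read ``has first $t$ pivots'' as ``completed $t$ rounds''; if round $t$ instead halted at step 6, the two runs would halt identically anyway.
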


\begin{proof}
The main idea is that the choices of pivots and the cap palettes do not change. Induct on the rounds. The initial palettes are independent of the sample set. If the first $s-1$ rounds agree, the current palettes agree pointwise. The original pivot $f_s$ remains a candidate in $A'$, and every other candidate in $A'$ was also a candidate in $A$. Thus its maximality and the fixed tie-breaking order are unchanged. The cap and update depend only on the current palettes, the pivot, and the chosen color. Since the original run completed $t$ rounds, the same pointwise equality also prevents the restricted run from stopping earlier.\end{proof}

\section{Sampling upper bound}\label{sec:sampling}

The container viewpoint makes the counting strategy explicit. For each $k$-colorable induced subgraph on $q$ vertices, the process selects pivots and their colors, which determine the relevant palette container when replayed on a suitable subset of the sample. The counting proof supplements this pivot fingerprint with a block choice and the ordered vertices in the selected blocks; this replay seed contains all pivots and, by deletion stability, reproduces the same container as the full-sample run. The remaining sample positions must lie inside that container. We count these descriptions and bound the probability of such a completion, then take a union bound over the descriptions.

For each $t\in \{1,2,...,T\}$, partition the positions $[q]$ deterministically into $2t$ nonempty blocks whose sizes differ by at most one. 

Our counting argument encodes each colorable sample by one of a small number of \emph{descriptions}. We first record what such a description is.

\begin{definition}[Description]\label{def:description}
A \emph{description} is a tuple
\[
 (t,\,W,\,\mathbf i,\,u_1,\ldots,u_a,\,v_1,\ldots,v_b)
\tag{9}
\]
consisting of the following data:
\begin{enumerate}
\item an integer $t\in\{1,\ldots,T\}$;
\item a $t$-element set $W\subseteq[2t]$ of block indices of the deterministic partition of $[q]$ into $2t$ blocks used in the construction of $A_0$;
\item a color string $\mathbf i=(i_1,\ldots,i_t)\in[k]^t$;
\item an ordered tuple $(u_1,\ldots,u_a)$ of distinct vertices of $G$, with $a$ equal to the total size of the blocks in $W$;
\item an ordered tuple $(v_1,\ldots,v_b)$ of distinct vertices of $G$, with $b=q-a$.
\end{enumerate}
The tuple is \emph{retained} if $u_1,\ldots,u_a,v_1,\ldots,v_b$ are pairwise distinct and, on replaying the process from $A_0:=\{u_1,\ldots,u_a\}$ with the color string $\mathbf i$, the process completes $t$ rounds and its true cap $C:=U(P^t)$ satisfies
\[
 \{u_1,...,u_a,v_1,...,v_b\}\subseteq C,
 \qquad
 |V(G)\setminus C|\ge\alpha tn.
\tag{10}
\]
For a retained description we write
\[
 h:=|V(G)\setminus C|.
\tag{11}
\]
\end{definition}

We now show in Proposition~\ref{prop:sampling} that every colorable sample $S$ gives rise to at least one retained description, and that each retained description determines at most one sample. There are then only few possible descriptions, and each one is realized by a $q$-tuple with exponentially small probability.

\begin{proposition}\label{prop:sampling}
With $T$ and $\alpha$ as above, let
\[
 q:=\max\left\{4T,\left\lceil\frac{4}{\alpha}\log(16k)\right\rceil\right\}.
\tag{12}
\]
For every $\eps$-far graph $G$, a uniformly random $q$-vertex induced subgraph is non-$k$-colorable with probability at least $2/3$. Consequently,
\[
 S_k(\eps)=O\!\left(\frac{k\log k}{\eps}\right).
\tag{13}
\]
\end{proposition}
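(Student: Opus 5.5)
We sample the $q$ positions as an ordered tuple $(x_1,\ldots,x_q)$ of distinct vertices and show that the probability that $G[\{x_1,\ldots,x_q\}]$ is $k$-colorable is at most $1/3$. If $q>n$ we sample all of $V(G)$, and $G$ itself is not $k$-colorable by Lemma~\ref{lem:distance}, so assume $q\le n$.

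\textbf{Step 1: every colorable sample is encoded by a retained description.} Let $S$ be a colorable sample with a proper coloring $\varphi$. Since $q\ge 4T\ge T$, Lemma~\ref{lem:invariants} shows that the run on $A=S$ with colors $i_t=\varphi(f_t)$ completes $T$ rounds and keeps $S\subseteq U(P^t)$ for all $t$. Lemma~\ref{lem:missing} then gives a first time $t$ with $|U(P^t)|\le(1-\alpha t)n$, that is, $h=|V\setminus U(P^t)|\ge\alpha tn$.

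Now encode the pivots. The pivots $f_1,\ldots,f_t$ occupy at most $t$ of the $2t$ blocks, so we may choose a $t$-set $W\subseteq[2t]$ of blocks containing all pivot positions. Let $A_0$ be the sample entries in the blocks of $W$, listed in order as $u_1,\ldots,u_a$, and let the remaining entries be $v_1,\ldots,v_b$. Since $f_1,\ldots,f_t\in A_0\subseteq S$, Lemma~\ref{lem:stability} shows that replaying on $A_0$ with $\mathbf i$ reproduces the same cap $C=U(P^t)$. Hence $S\subseteq C$, the description is retained, and the sample tuple is recovered from it.

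\textbf{Step 2: union bound.} Fix $t$, $W$ and $\mathbf i$. There are at most $\binom{2t}{t}k^t\le(4k)^t$ such choices. For each choice, the probability that the tuple realizes a retained description is at most
\[
\Pr[u\text{-part in }W\text{-blocks, }v\text{-part in }C].
\]
The key point is that $C$ is determined by the $u$-entries alone. Condition on the entries in the $W$-blocks; this fixes $C$ and $h\ge\alpha tn$. The remaining $b\ge q/2-O(1)$ entries, roughly $q/2$ of them since the blocks are balanced, must all lie in $C$. Each does so with conditional probability at most $(n-h)/(n-a)$ given the earlier entries. This is the delicate point: sampling is without replacement, so we bound each factor by $(|C|-j)/(n-a-j)\le |C|/n\le 1-\alpha t$. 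This holds because $A_0\subseteq C$, which removes the same elements from numerator and denominator. The probability is therefore at most $(1-\alpha t)^{b}\le e^{-\alpha t q/2+O(\alpha t)}$, and since $b\ge q/2-1$ we can absorb the constant into the bound $e^{-\alpha t q/4}$ for safety.

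\textbf{Step 3: summation.} The total probability is at most
\[
\sum_{t\ge1}(4k)^t e^{-\alpha tq/4}\le\sum_{t\ge1}\Bigl(4k\cdot\tfrac1{16k}\Bigr)^t=\sum_{t\ge1}4^{-t}=\tfrac13,
\]
using $q\ge\frac4\alpha\log(16k)$. This proves the claim. Then (13) follows since $q=O(k/\eps+(k/\eps)\log k)$.

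The main obstacle is making the conditional probability bound rigorous. It requires that $C$ depends only on the $W$-block entries, so that the other entries are still uniformly distributed among unused vertices, and it requires controlling the ratio without replacement. The first step, choosing $W$ to cover all pivots, is routine.
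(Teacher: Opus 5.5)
Your proposal is correct and follows the paper's proof essentially step for step: the same encoding by $(t,W,\mathbf i)$ together with the ordered $u$- and $v$-parts, the same replay on $A_0$ via deletion stability, the same conditioning on the $W$-block entries, and the same $\sum_t (4k)^t e^{-\alpha t q/4}\le 1/3$ union bound. Two small fixes are needed, and neither affects the final bound.

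First, the $j$-th conditional factor is $(|C|-a-j)/(n-a-j)$, not $(|C|-j)/(n-a-j)$; only the former is $\le |C|/n$. Second, $b$ can fall short of $q/2$ by order $t$ (up to about $t/2$), not by $O(1)$, so $b\ge q/2-1$ is false in general. You should instead get $b\ge q/4$ from $b\ge q/2-t$ together with $q\ge 4T\ge 4t$, which is exactly the role of the $4T$ term in the definition of $q$.
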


\begin{proof}
If $n<q$, we sample all vertices. Since $\eps n^2>0$, an $\eps$-far graph cannot itself be $k$-colorable, so the success probability is one.

\textbf{Get a retained description from a $k$-colorable $S$}

Assume $n\ge q$, and equivalently sample a uniformly random ordered tuple $S=(x_1,\ldots,x_q)$ of distinct vertices. Suppose $G[S]$ is $k$-colorable and fix a proper coloring $\varphi:S\to[k]$. Run the trial coloring process with $A=S$ and choose $i_t=\varphi(f_t)$. By Lemma~\ref{lem:invariants}, since $|S|\ge 4T\ge T$, the run completes $T$ rounds and preserves $S$ in every cap. Lemma~\ref{lem:missing} gives some $1\le t\le T$ and $C^*=U(P^t)$ such that
\[
 S\subseteq C^*,
 \qquad
 |V(G)\setminus C^*|\ge\alpha tn.
\tag{14}
\]

The $t$ pivots occupy at most $t$ blocks. Choose a $t$-subset $W\subseteq[2t]$ containing those blocks, adding arbitrary blocks if necessary. Let $A_0$ be the sample vertices in the blocks indexed by $W$, and put $B_0=S\setminus A_0$. Write
\[
 a:=|A_0|,
 \qquad b:=|B_0|=q-a.
\]
Then $\{f_1,\ldots,f_t\}\subseteq A_0\subseteq S$. By Lemma~\ref{lem:stability}, replaying the first $t$ rounds from $A_0$ with the same color string $\mathbf i=(\varphi(f_1),\ldots,\varphi(f_t))$ produces the same cap $C^*$. The ordered $u$-part $(u_1,\ldots,u_a)$ corresponding to $S$ is obtained from $A_0$ by listing its vertices in increasing order of their sample positions (their relative position is the same as in the ordered tuple $S$), and $v_1,\ldots,v_b$ are obtained in the same way from $B_0$. Once $t$ and $W$ are fixed, the correspondence between the ordered $q$-tuples $S$ and $(u_1,...,u_a,v_1,...,v_b)$ is a bijection.

Consequently,
\[
 (t,\,W,\,(\varphi(f_1),\ldots,\varphi(f_t)),\,u_1,\ldots,u_a,\,v_1,\ldots,v_b)
\]
is a retained description in the sense of Definition~\ref{def:description}: the replay from $A_0$ with the color string $\mathbf i$ yields the cap $C^*$, and (14) gives both $S\subseteq C^*$ and $|V(G)\setminus C^*|\ge\alpha tn$. Hence every colorable sample $S$ admits at least one retained description (there may be multiple possible choices of $t$, $W$, $\varphi$). 

\textbf{Recover $S$ from the retained description}

Conversely, the sample $S$ itself is recovered from a retained description as follows: by $W$ and the block partition we know which of the $q$ sample positions belong to the $u$-part and which to the $v$-part, the two tuples are listed in increasing order of their positions, and the vertices of the two tuples are pairwise distinct. Hence a retained description determines at most one ordered sample.

\textbf{Counting the number of retained descriptions.}

The ordered $v$-part of a retained description must consist of distinct vertices of $C\setminus A_0$: in the $v$-positions of the sample we have placed vertices that avoid the true missing set $V(G)\setminus C$, and no vertex may appear in both the $u$-part and the $v$-part. Since $A_0\subseteq C$, there are exactly $n-a-h$ available vertices, and hence at most $P_{n-a-h}^b$ choices for the ordered $v$-part.

For fixed $t,W,\mathbf i$, the block-size estimate gives
\[
 a\le t\left\lceil\frac{q}{2t}\right\rceil\le\frac q2+t,
 \qquad
 b=q-a\ge\frac q2-t\ge\frac q4,
\tag{15}
\]
where $q\ge4T\ge4t$. We first give a common upper bound. For fixed $t,W,\mathbf i$, the numbers $a$ and $b$ are fixed, but the replayed cap and its missing-set size may depend on the ordered $u$-tuple. Let $\mathcal U_{t,W,\mathbf i}$ be the set of ordered tuples $\mathbf u=(u_1,\ldots,u_a)$ of distinct vertices for which replaying the process on $A_0=\{u_1,\ldots,u_a\}$ with color string $\mathbf i$ completes $t$ rounds and yields a cap $C=U(P^t)$ satisfying $A_0\subseteq C$ and $|V(G)\setminus C|\ge\alpha tn$. For $\mathbf u\in\mathcal U_{t,W,\mathbf i}$, write $h(\mathbf u):=|V(G)\setminus C|$ for the missing-set size of its replayed cap.

Conditioned on a particular ordered $u$-tuple, the remaining positions are filled uniformly by an ordered $b$-tuple from $V(G)\setminus A_0$. Thus, writing $h=h(\mathbf u)$ temporarily, the fraction of these completions that avoid the missing set is
\[
\begin{aligned}
\frac{P_{n-a-h}^b}{P_{n-a}^b}
&=\prod_{j=0}^{b-1}\frac{n-a-h-j}{n-a-j}\\
&=\prod_{j=0}^{b-1}\left(1-\frac{h}{n-a-j}\right)\\
&\le\left(1-\frac{h}{n-a}\right)^b\\
&\le\exp\left(-\frac{hb}{n-a}\right)\\
&\le \exp\left(-\frac{hb}{n}\right)\le e^{-\alpha tb}
\le e^{-\alpha tq/4}.
\end{aligned}
\tag{16}
\]
The displayed product is used only when $b\le n-a-h$; otherwise the fraction is zero and the same bound holds. The condition $A_0\subseteq C$ ensures that the entire missing set lies in $V(G)\setminus A_0$. The penultimate inequality uses $h\ge\alpha tn$ and $n-a\le n$.

For fixed $t,W,\mathbf i$, let $N_{t,W,\mathbf i}$ denote the number of retained descriptions. Summing separately over the ordered $u$-tuples, rather than treating $h$ as a constant, gives
\[
\begin{aligned}
\frac{N_{t,W,\mathbf i}}{P_n^q}
&=\frac{1}{P_n^a}\sum_{\mathbf u\in\mathcal U_{t,W,\mathbf i}}
  \frac{P_{n-a-h(\mathbf u)}^b}{P_{n-a}^b}\\
&\le\frac{|\mathcal U_{t,W,\mathbf i}|}{P_n^a}
  e^{-\alpha tq/4}
\le e^{-\alpha tq/4}.
\end{aligned}
\]
Here we used $P_n^q=P_n^aP_{n-a}^b$ and $|\mathcal U_{t,W,\mathbf i}|\le P_n^a$. 

There are at most $\binom{2t}{t}\le4^t$ choices for $W$ and $k^t$ choices for the color string. Every colorable sample yields at least one retained description, even though its proper coloring may depend on the sample. Summing the preceding bound over all strings and block choices therefore gives, by (12),
\[
\begin{aligned}
\Pr[G[S]\text{ is }k\text{-colorable}]
&\le\sum_{t=1}^T(4k)^t e^{-\alpha tq/4}\\
&\le\sum_{t=1}^T(4k)^t(16k)^{-t}\\
&=\sum_{t=1}^T4^{-t}\le\frac13.
\end{aligned}
\]
Thus the induced subgraph is non-$k$-colorable with probability at least $2/3$. Finally, $T=O(k/\eps)$, $\frac{1}{\alpha}=\frac{4(k+1)}{\eps}$, and $\log(16k)=O(\log k)$, proving (13).\end{proof}

\section{Concluding remarks}

The proof combines the container method with a palette-based coloring process tailored to $k$-colorability. 

The reason we can improve upon the bound of \(\cite{BS2023,KushnirShapira2026}\) is that our method allows us to introduce a structure to choose \(t\) blocks out of \(2t\) blocks. By contrast, the fingerprint in \(\cite{BS2023,KushnirShapira2026}\) is effectively enumerated by a factor of the form \(\binom{q}{t}\), together with the corresponding color assignments. Without nontrivial modifications, their mechanism does not support a \(\binom{2t}{t}\)-type encoding like the one used in our algorithm. In the Kushnir--Shapira construction, the container at round \(t\) does not depend on the entire independent set \(I\); it depends only on the fingerprint from the first \(t\) rounds. However, every vertex in \(\operatorname{vert}(F_t)\) must be assigned a color in order to determine \(F_t\). If one instead starts with a subset \(V\subseteq \operatorname{vert}(I)\) without specifying the colors of all vertices in \(V\), different choices of colors can lead to different processes and hence to different projected containers \(\operatorname{vert}(C_t)\). See Appendix~\hyperref[app:worked-example]{A} for a concrete example illustrating the algorithms.

A second important ingredient is how the assumption that the original graph is $\eps$-far from $k$-colorable forces some properties. Intuitively, call a vertex ''bad'' at a given stage if the process is unwilling to assign it a color. If the sampled subgraph is $k$-colorable, the process can preserve its proper coloring and continue; conversely, if too few vertices become bad throughout the process, the reconstruction lemma produces a coloring of the original graph with fewer than $\eps n^2$ monochromatic edges, contradicting $\eps$-farness. Moreover, a coloring process in which many such "bad" vertices appear occurs with very low probability if the subgraph $S$ is $k$-colorable. 

\section*{Statement on AI usage}

The author had early discussions with ChatGPT about how to approach the palette system. The author then conceived how palettes could be used within the proof and wrote out the complete argument. ChatGPT was subsequently used to edit the author's first draft of the proof into a form conforming to standard conventions of mathematical exposition. After this, the author reviewed the draft in full and polished the exposition to obtain the final version of this paper. The author takes full responsibility for all content in this paper.

\section*{Appendix: A Worked Example of the Fingerprint--Container and Palette Algorithms}\phantomsection\label{app:worked-example}

\begin{quote}
\textbf{Fingerprint and Container Generation} (~\cite{BS2023,KushnirShapira2026}).\\
\textbf{Input:} A graph $H=(V,E)$, an independent set $I\subseteq V$, and a fixed vertex order for tie-breaking.\\
\textbf{Output:} Fingerprints $F_1,\ldots,F_{|I|}$ and containers $C_1,\ldots,C_{|I|}$.
\begin{tabbing}
\hspace{1.5em}\=\hspace{1.5em}\=\kill
1. \> $F_0\gets\varnothing$; $C_0\gets V$.\\
2. \> \textbf{for} $t=1,2,\ldots,|I|$ \textbf{do}\\
3. \> \> Choose $f_t\in I\setminus F_{t-1}$ with maximum degree in $H[C_{t-1}]$, breaking ties by the fixed order.\\
4. \> \> $X_t\gets\{w\in C_{t-1}:\deg_{H[C_{t-1}]}(w)>\deg_{H[C_{t-1}]}(f_t)\}$.\\
5. \> \> $Y_t\gets N_H(f_t)\cap C_{t-1}$.\\
6. \> \> $F_t\gets F_{t-1}\cup\{f_t\}$.\\
7. \> \> $C_t\gets C_{t-1}\setminus(X_t\cup Y_t)$.\\
8. \> \textbf{end for}\\
9. \> \textbf{return} $(F_1,C_1),\ldots,(F_{|I|},C_{|I|})$.
\end{tabbing}
\end{quote}

We give a concrete example illustrating both the fingerprint--container algorithm of~\cite{BS2023,KushnirShapira2026} and our palette algorithm. The example is a 3-coloring instance on a graph $G$. Here $k=3$: the auxiliary graph is the triplicate construction $H=K_3*G$, where each vertex $u$ of the original bipartite graph $G$ is replicated into three copies $u_1,u_2,u_3$ forming a triangle. The original graph $G$ has vertex set $\{a,b,c,v,x,y\}\cup\{w,r,s,t\}$ and edge set
\[
 E(G)=\{vw,\,vr,\,vs,\,vt,\,yr,\,ys,\,xr,\,wa,\,wb,\,wc\}.
\]

\textbf{Figure~\ref{fig:bipartite-base}} and \textbf{Figure~\ref{fig:triplicate}} show the base graph $G$ together with the resulting graph $H$ obtained by the triplicate construction. 

We first illustrate the first two rounds of our palette process on this graph. Let $A=\{v,x,y\}$, and fix the total order so that $x$ precedes $y$. Initially every palette is $[3]$, so the first pivot is $f_1=v$, whose degree in $G$ is larger than those of $x$ and $y$. Choose any prescribed color $i_1\in[3]$ for $v$. In the update from $P^1$ to $Q^1$, color $i_1$ is deleted from the palettes of $w,r,s,t$. At the start of round $2$, both $x$ and $y$ have minimum palette-degree $0$ (consider color $i_1$). Thus the tie-breaking order selects $f_2=x$ and $D_2=0$. In forming $P^2$, the vertices $w,r,s$ have positive minimum palette-degree (respectively $3,2,1$), so their palettes are emptied. In particular, $P^2(x)=[3]$, so the second prescribed color $i_2$ may be any of $1,2,3$.

\begin{figure}[htbp]
\centering
\includegraphics[width=0.55\textwidth]{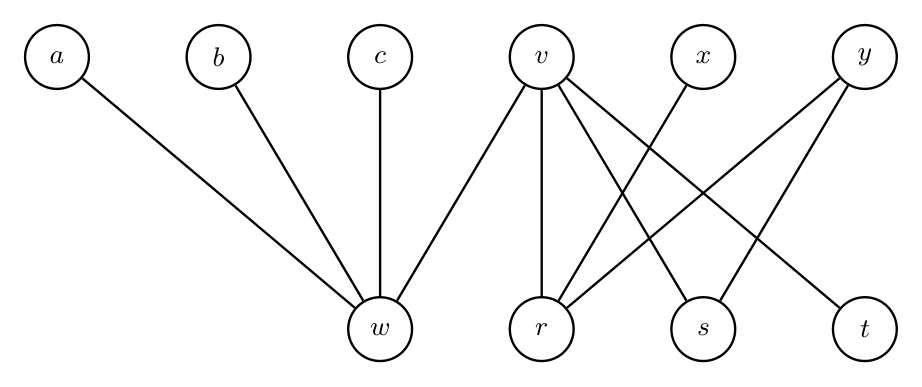}
\caption{The original bipartite graph $G$.}
\label{fig:bipartite-base}
\end{figure}

\begin{figure}[htbp]
\centering
\includegraphics[width=0.8\textwidth]{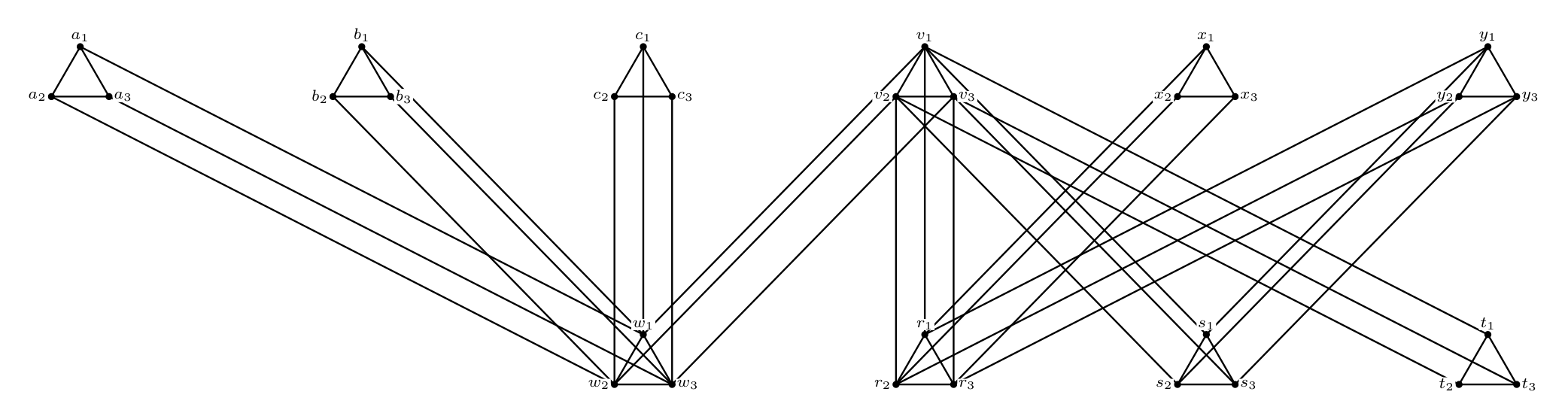}
\caption{The triplicate graph $H=K_3*G$. Each vertex of $G$ is replaced by a triangle of three copies.}
\label{fig:triplicate}
\end{figure}

Then we show a simulation of the algorithm in ~\cite{BS2023,KushnirShapira2026}. We consider two possible $I$, and run the algorithm for $t=1,2,3$. $I_1=\{v_1,x_2,y_1\}$, $I_2=\{v_1,x_1,y_2\}$ are inputs. In the graph $H[C_0]=H$, $\text{deg}(v_1)=6$, $\text{deg}(x_1)=\text{deg}(x_2)=3$, $\text{deg}(y_1)=\text{deg}(y_2)=4$.

For both cases, the first iteration the algorithm selects the vertex of $I$ with largest degree in the current graph; this is $v_1$. \textbf{Figure~\ref{fig:ver2}} illustrates the first iteration of the container algorithm.  The figure shows the resulting container $C_1$: all copies of the neighbors of $v_1$ (namely $v_2,v_3,w_1,r_1,s_1,t_1$) together with their incident edges have been deleted.

\begin{figure}[htbp]
\centering
\includegraphics[width=0.8\textwidth]{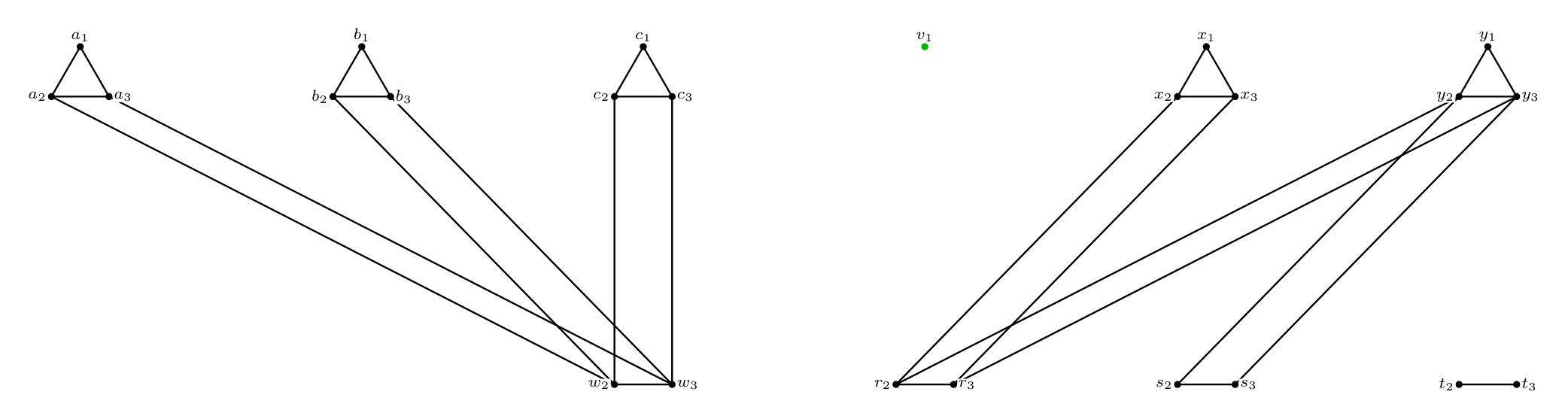}
\caption{The container $C_1$ after the first iteration of the Blais--Seth fingerprint--container generator on $H=K_3*G$, with $v_1$ selected from $I=\{v_1,x_2,y_1\}$. The green vertex is $v_1$; its neighbors $v_2,v_3,w_1,r_1,s_1,t_1$ and all incident edges have been removed.}
\label{fig:ver2}
\end{figure}

Case 1: $I_1$. $\text{deg}_{H(C_1)}(x_2)=3, \text{deg}_{H(C_1)}(y_1)=2$. 
$t=2$ step: The algorithm chooses $f_2=x_2$. $X\cup Y=\{x_1,x_3,r_2,w_2,w_3,y_2,y_3\}$. $\text{vert}(C_2)=\{a,b,c,v,x,y,r,s,t\}$. 
$t=3$ step: The algorithm chooses $f_3=y_1$. $C_3=\{v_1,x_2,y_1,r_3\}$. $\text{vert}(C_3)=\{v,x,y,r\}$. 

Case 2: $I_2$. $\text{deg}_{H(C_1)}(x_1)=2, \text{deg}_{H(C_1)}(y_2)=4$. 
$t=2$ step: The algorithm chooses $f_2=y_2$. $X\cup Y=\{y_1,y_3,r_2,s_2\}$. $\text{vert}(C_2)=\{w,a,b,c,v,x,y,r,s,t\}$.
$t=3$ step: The algorithm chooses $f_3=x_1$. $C_3=\{v_1,y_2,x_1,a_1,b_1,c_1,r_3,s_3,t_2,t_3\}$. $\text{vert}(C_3)=\{v,x,y,a,b,c,r,s,t\}$. 

In conclusion, this example has $\text{vert}(I_1)=\text{vert}(I_2)$, and in the first, second, and third steps of the algorithm, the chosen colors are 1, 2, and 1, respectively. For each $t\in\{2,3\}$, the projected containers $\text{vert}(C_t)$ obtained from inputs $I_1$ and $I_2$ are different. This example shows that the container process depends not only on $\text{vert}(I)$, but also on the colors assigned to its vertices; moreover, even specifying the colors selected in the first few iterations does not determine a unique projected container $\text{vert}(C_t)$, since the two processes have the same selected-color sequence but different pivot vertices and different projected containers.

\end{document}